\newcommand{\remove}[1]{{}}
\newtheorem{theorem}{Theorem}
\newtheorem{lemma}{Lemma}
\newenvironment{proof}{\paragraph{Proof:}}{\hfill$\square$}
\title{On Compatible Triangulations with a\\ Minimum Number of Steiner Points\thanks{Work is supported by the Natural Sciences and Engineering Research Council of Canada (NSERC).}
} 
\author{Anna Lubiw}
\author{Debajyoti Mondal}
\affil{Cheriton School of Computer Science, University of Waterloo, Canada 
    \texttt{\{alubiw,dmondal\}@uwaterloo.ca}}
\begin{document}

\maketitle

\begin{abstract}
Two vertex-labelled polygons are \emph{compatible} if they have the same clockwise cyclic ordering of vertices. The definition extends to polygonal regions (polygons with holes) and to triangulations---for every face, the clockwise cyclic order of vertices on the boundary must be the same. It is known that every pair of compatible $n$-vertex polygonal regions can be extended to compatible triangulations by adding $O(n^2)$ Steiner points. Furthermore, $\Omega(n^2)$ Steiner points are sometimes necessary, even for a pair of polygons. Compatible triangulations provide piecewise linear homeomorphisms and are also a crucial first step in morphing planar graph drawings, aka ``2D shape animation.'' An intriguing open question, first posed by Aronov, Seidel, and Souvaine in 1993, is to decide if two compatible polygons have compatible triangulations with at most $k$ Steiner points.  In this paper we prove the problem to be NP-hard for polygons with holes.  The question remains open for simple polygons.  
\end{abstract}

\section{Introduction}

For many computational geometry problems involving a polygon or polygonal region, the standard first step is to triangulate the region.  
However, for some problems, such as morphing of polygons, or finding a homeomorphism between polygons, the input consists of two polygons with a correspondence between them, and the desirable first step is to triangulate them in a consistent way. Unlike for a single polygon, it may be necessary to add new vertices, called ``Steiner points.'' 
Our paper is about this harder problem, which was called ``joint triangulation'' by Saalfeld~\cite{DBLP:conf/compgeom/Saalfeld87} and ``compatible triangulation'' by Aronov, Seidel, and Souvaine~\cite{DBLP:journals/comgeo/AronovSS93}.
%

Research on finding compatible triangulations 
 is motivated by applications in morphing~\cite{alamdari2016morph} and 2D shape animation~\cite{baxter2009compatible,surazhsky2004high},  
  and in computing piecewise linear homeomorphisms of polygons.

Throughout, we deal with vertex-labelled straight-line planar drawings. The most general input we consider is a polygon with holes (a polygonal region), where we allow a hole to degenerate to a single point.  
Two polygons are \emph{compatible} if they have the same clockwise cyclic ordering of vertices. Two polygonal regions $P_1$ and $P_2$ are \emph{compatible} if their outer polygons are compatible, and their holes are compatible, i.e.~each hole (considered as a polygon) in $P_1$ corresponds to a compatible hole in $P_2$. Note that the labelling provides the correspondence. 

 A \emph{triangulation $T(P)$ of a polygonal region $P$} is a subdivision of its interior region  into triangular faces.   The vertices of $T(P)\setminus P$ are called \emph{Steiner points}
 of $T(P)$.  A pair of triangulations $T(P_1)$  and $T(P_2)$ of compatible polygonal regions $P_1$ and $P_2$, respectively, are \emph{compatible}  if their faces are compatible, i.e.~every face of $T(P_1)$ (considered as a polygon) corresponds to a compatible face of $T(P_2)$.  Again, the labelling provides the correspondence. 
Figure~\ref{fig:intro} illustrates   a pair of compatible polygonal regions and their compatible triangulations. 

\begin{figure}[pt]
\centering
\includegraphics[width=.6\textwidth]{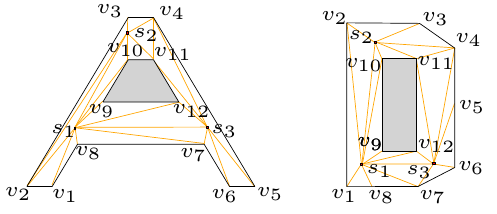}
\caption{Two compatible polygons, each with one hole (shaded gray), and compatible triangulations of them with 3 Steiner points.
}
\label{fig:intro}
\end{figure}

Two special cases of compatible triangulations were studied independently. Saalfeld in 1987~\cite{DBLP:conf/compgeom/Saalfeld87} considered the case of two rectangles each with $n$ points inside them (where the correspondence between the points is given) and showed that compatible triangulations always exist. Saalfeld's construction may require  an exponential number of Steiner points~\cite{sw1994}. Aronov et al., in 1993~\cite{DBLP:journals/comgeo/AronovSS93} considered the case of simple compatible polygons. They showed that 
there exist compatible triangulations with $O(n^2)$ Steiner points.
(A similar construction was given by Thomassen in 1983~\cite[Theorem 4.1]{thomassen1983deformations}.) Furthermore, Aronov et al.~gave an $O(n^2)$-time algorithm to compute such compatible triangulations, and they gave examples where $\Omega(n^2)$ Steiner points are necessary.
They posed as an open problem to decide if two polygons have a compatible triangulation with $k$ Steiner points, and observed that the case $k=0$ can be decided in polynomial time via dynamic programming.

\medskip
\noindent{\bf Our Result.} We show that it is NP-hard to decide if two compatible polygonal regions have compatible triangulations with at most $k$ Steiner points, where $k \in {\mathbb N}$ is given as part of the input. 

\medskip\noindent{\bf Further Background.}
There are a number of further results for the case of two simple polygons.  
Kranakis and Urrutia~\cite{DBLP:journals/ijcga/KranakisU99} 
gave an $O(n+r^2)$-time algorithm to find compatible triangulations of simple compatible polygons with $O(n+r^2)$ Steiner points, where $r$ is the number of reflex vertices.  
Gupta and Wenger~\cite{DBLP:journals/talg/GuptaW07} gave
a polynomial-time algorithm that provides an $O(\log n)$ approximation to the minimum number of Steiner points. 
A number of heuristic algorithms have been proposed---see e.g.,~\cite{baxter2009compatible,surazhsky2004high}.

There is also a line of research on the case of polygons with point holes (Saalfeld's problem). Souvaine and Wenger~\cite{sw1994} gave an $O(n^2)$-time algorithm to  compute compatible triangulations with $O(n^2)$ Steiner points, and asked if there is a polynomial-time algorithm to construct compatible triangulations with the minimum number of Steiner points.   
 Pach et al.~\cite{DBLP:journals/algorithmica/PachSS96}
 proved that $\Omega(n^2)$ Steiner points are sometimes necessary. 

For the case of general polygonal regions---which encompasses both the above special cases---Babikov et al.~\cite{DBLP:conf/cccg/BabikovSW97} gave an $O(n^2)$-time algorithm to compute compatible triangulations with $O(n^2)$ Steiner points.

One approach to computing compatible triangulations 
is to first compute a triangulation for one of the polygonal regions, and then draw its underlying graph into the other polygonal region using polylines for drawing edges. 
The edge bends give rise to the Steiner points. This idea relates to
the problem of drawing a planar graph on a given set of points, where the correspondence between vertices and the points is given. 
 Pach and Wenger~\cite{DBLP:journals/gc/PachW01} gave an $O(n^2)$-time algorithm to compute such an embedding with $O(n^2)$ bends in total, and this was extended to deal with a bounding polygonal region in~\cite{DBLP:journals/jgaa/ChanFGLMS15}. 


The version of the compatible triangulation problem where the correspondence between the two polygonal regions is \emph{not} given is also well-studied and very relevant in practice, e.g.~see~\cite{baxter2009compatible}.
In this setting, Aichholzer et al.~\cite{aichholzer2003towards} made the fascinating conjecture that for any two point sets each with $n$ points, of which $h$ lie on the convex hull, there is a mapping between them that permits compatible triangulations with no Steiner points. 



\section{Preliminaries}

Let $P$ be a polygon, possibly with holes.  
Two points $a,b$ in $P$ are \emph{visible} if the line segment between them lies  strictly inside $P$; they are \emph{1-bend visible} if there is a point $c$ inside $P$ that is visible to both $a$ and $b$.
%
 
 A \emph{dent} on the boundary of $P$ consists of three consecutive vertices $u,d,v$ of $P$
 such that $d$ is convex and $u,v$ are reflex vertices, e.g., see the polygon $P_1$ 
 in Figure~\ref{fig:dent}. We refer to $d$ as the \emph{peak}
 of the dent. The \emph{visibility region} of $d$
 consists of all the points inside $P$ that are 
 visible to $P$. 
 An \emph{inward dent} on the boundary of $P$ consists of three consecutive vertices $u,d,v$ of $P$
 such that $d$ is reflex and $u,v$ are convex vertices. 
The following simple lemma about dents in compatible triangulations of polygons
will be a key ingredient of our NP-hardness proof.

\begin{figure}[pt]
\centering
\includegraphics[width=.5\textwidth]{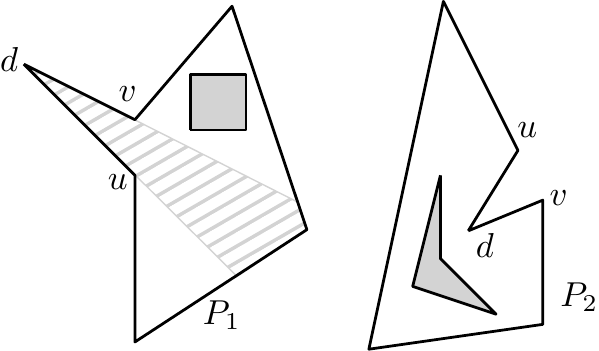}
\caption{Illustration for Lemma~\ref{lem:basic}. The visibility region of $d$
 is shown in gray stripes.}
\label{fig:dent}
\end{figure}

\begin{lemma}\label{lem:basic}
Let $P_1$ and $P_2$ be a pair of compatible polygons.
Assume that $P_1$ contains a dent $u,d,v$, and let 
 $\Psi$ be the visibility region of $d$ in $P_1$. 
 If $u,v$ are not 
 visible in $P_2$, then 
in any compatible triangulations
 $d$ must be adjacent either to a Steiner point  or   
a vertex (except $u$ and $v$) 
 inside $\Psi$. 
\end{lemma}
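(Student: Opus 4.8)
The plan is to argue by contradiction, exploiting the fan of triangles around the peak $d$. Since $u,d,v$ is a dent with $d$ convex, the two edges of $P_1$ incident to $d$ are exactly $du$ and $dv$, so in any triangulation $T(P_1)$ the triangles incident to $d$ form a fan $dw_0w_1,\, dw_1w_2,\, \dots,\, dw_{m-1}w_m$ with $w_0=u$, $w_m=v$ and $m\ge 1$, and the neighbours of $d$ in $T(P_1)$ are $w_0,\dots,w_m$. If $m\ge 2$, one checks easily that $w_1\notin\{u,v\}$ (otherwise $dw_0w_1$ would be degenerate, or the boundary edge $dv=dw_1$ would lie on two triangles); moreover $w_1$ is joined to $d$ by an interior chord, so $w_1$ is visible from $d$ and hence lies in $\Psi$. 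As $w_1$ is either a Steiner point or an original vertex different from $u,v$, it is exactly the neighbour asserted by the lemma. So it remains only to rule out the case $m=1$.

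Suppose $m=1$. Then $duv$ is the unique triangle of $T(P_1)$ incident to $d$, so $duv$ is a face of $T(P_1)$. By the definition of compatible triangulations there is a face-correspondence under which this face maps to a face of $T(P_2)$ with the same clockwise vertex sequence; that face is thus the triangle $duv$ in $T(P_2)$, so $uv$ is an edge of $T(P_2)$. Now I would show that $uv$ is not a boundary edge of $P_2$: since $P_1$ and $P_2$ are compatible they share the same clockwise cyclic order of vertices, hence $u,d,v$ are consecutive on the boundary of $P_2$ as well, so the boundary edges of $P_2$ there are $du$ and $dv$, not $uv$. Therefore $uv$ is an interior diagonal of $T(P_2)$, its relative interior lies in the interior of $P_2$, and so $u$ and $v$ are visible in $P_2$, contradicting the hypothesis. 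This contradiction forces $m\ge 2$ and completes the proof.

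The argument is short once the fan structure at $d$ is in hand; the one step that needs care is the transfer to $P_2$ when $m=1$, where one must check that the triangle $duv$ forced in $T(P_2)$ genuinely witnesses visibility of $u$ and $v$ there — i.e.\ that $uv$ is not already an edge of $P_2$, which follows from the equality of the cyclic vertex orders. It is also worth recording explicitly that any triangulation neighbour of $d$ lies in $\Psi$, so for such a neighbour the two possibilities ``Steiner point'' and ``original vertex other than $u,v$ in $\Psi$'' are indeed exhaustive.
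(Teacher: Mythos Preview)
Your proof is correct and follows the same idea as the paper's: either the triangle $duv$ is a face of $T(P_1)$ (your case $m=1$, the paper's ``edge $(u,v)$'' case), which forces the edge $uv$ in $T(P_2)$ and contradicts non-visibility of $u,v$ there, or $d$ has a neighbour in $T(P_1)$ other than $u,v$, and any such neighbour lies in $\Psi$. The paper compresses this into two sentences; your version simply spells out the fan structure at $d$ and the check that $uv$ cannot be a boundary edge of $P_2$, which are worthwhile details but not a different argument.
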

\begin{proof}
Any triangulation of $P_1$ (even with Steiner points) must use the edge $(u,v)$ or an edge incident to $d$.  In compatible triangulations of $P_1$ and $P_2$ the edge $(u,v)$ is ruled out, and therefore $d$ must be adjacent to a Steiner point or a vertex in $\Psi\setminus \{u,v\}$.
\end{proof}


\section{NP-Hardness}
\label{sec:hardness}

\begin{figure*}[pt]
\centering
\includegraphics[width=\textwidth]{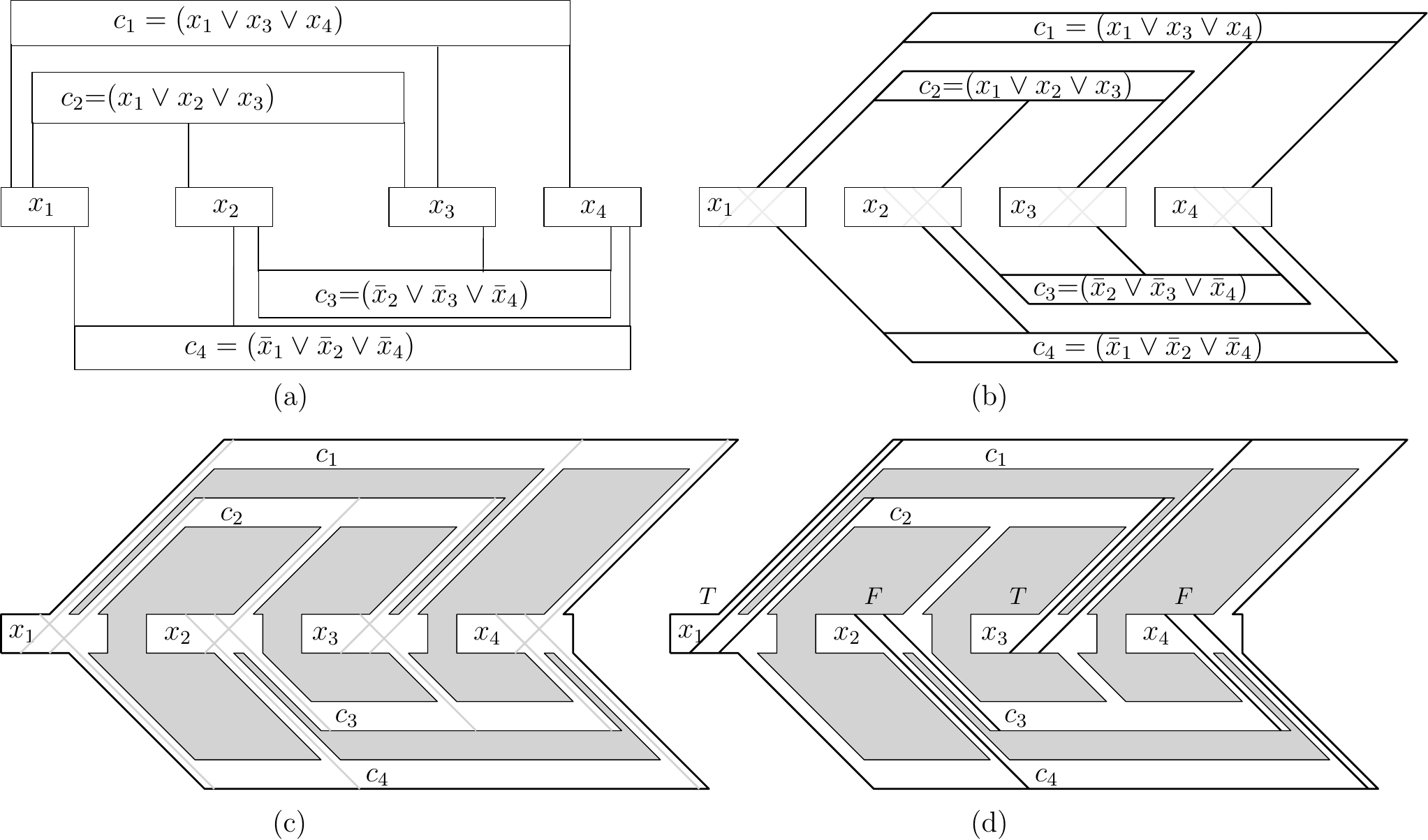}
\caption{(a) An instance $I$ of MRP-3SAT, and the corresponding drawing $\Gamma$. (b) $\Gamma'$.
  (c)--(d) Illustration for the hardness reduction.}
\label{np}
\end{figure*}

In this section we prove that given a pair of compatible polygonal regions $P_1,P_2$, 
and $k \in \mathbb N$,
it is NP-hard to decide if there are compatible triangulations of $P_1$ and $P_2$ with at most $k$ Steiner points.

We reduce from the monotone rectilinear planar $3$-SAT problem (MRP-3SAT), which is NP-complete~\cite{deBerg2010}.
 The input of an MRP-3SAT 
 instance $I$ is a collection $C$ of clauses over a set $U$ of Boolean variables 
 such that each clause contains at most three literals, and 
is either \emph{positive} (consists of only positive literals),
 or \emph{negative} (consists of only negative literals). 
 Moreover, the corresponding \emph{SAT-graph} $G_I$ (the bipartite graph with
 vertex set $C \cup U$ and edge set $\{(c,x) \in C \times U :$ $x$ appears in $c \}$) admits a planar
 drawing $\Gamma$ satisfying the following properties:  
\begin{enumerate}
\item[] {\bf -} Each vertex in $G_I$ is drawn as an axis-aligned rectangle in $\Gamma$. \newline
{\bf -} All the rectangles representing variables lie along a horizontal line $\ell$. 
\newline
{\bf -} The rectangles representing positive (respectively, negative) clauses lie above (respectively, below) $\ell$.  \newline
{\bf -} Each edge $(c,x)$ of $G_I$ is drawn as a vertical  line segment that connects the rectangles corresponding to $c$ and $x$, e.g., see Figure~\ref{np}(a).  \newline
\end{enumerate} 
The MRP-3SAT problem asks 
whether there is a 
truth assignment for $U$
 satisfying all clauses in $C$.

Given an instance $I=(U,C)$ of MRP-3SAT, we construct a pair of compatible polygonal regions $P_1$ and $P_2$ 
 such that they admit  compatible triangulations with at most $5|C|$ Steiner points, if and only if
 $I$ is satisfiable.

\paragraph{Idea of the reduction:} We first ensure that every clause 
 in $I$ has exactly three literals, by duplicating literals if necessary.
 Let the resulting instance be $I'$.
 It is straightforward to observe that $I'$ is also an instance of MRP-3SAT, and 
  $I'$ is satisfiable if and only if $I$ is satisfiable.
  Let $\Gamma$ be the drawing corresponding to $G_{I'}$. 
 
We   modify the drawing $\Gamma$ such that 
 the edges and   vertices corresponding to the positive (resp., negative) 
 clauses become parallelograms, slanted $45^\circ$ (resp., $-45^\circ$) to the right,
 e.g., see Figure~\ref{np}(b). 
For each clause $c \in C$, let $R(c)$ denote the parallelogram corresponding to $c$.  We call $R(c)$ the ``clause region''.
 For each variable $u\in U$, let $B(u)$ 
 denote the rectangle corresponding to $u$.
We call $B(u)$ the ``variable region''.
We call the edges of $G_{I'}$ \emph{connectors} and we call
the connectors
 that are incident to the top  (resp., bottom) side of $B(u)$ \emph{top} (resp., \emph{bottom})
 \emph{connectors} of $B(u)$.
We ensure that the extension of every top connector  intersects the
 extensions of all the bottom connectors inside $B(u)$. Let the resulting drawing be $\Gamma'$.
 We construct $P_1$ and $P_2$ by modifying two distinct copies of $\Gamma'$.

We prove that in any compatible triangulations with $5|C|$ Steiner points,
for each clause $c$, there is a triangulation edge $e_c$ that lies along one of the  connectors incident to the clause region.  
%
%
 If $c$ is positive (resp., negative)
 then we can set 
 the variable corresponding to $e_c$ to true (resp., false)
and this will satisfy the clause.   We get a valid truth-value assignment because a variable region cannot contain extensions of both top and bottom connectors.   
 Figures~\ref{np}(c)--(d) illustrate a satisfying truth assignment for $I$. 
 On the other hand, given a 
 satisfying truth assignment, we show how to find  compatible 
  triangulations for $P_1$ and $P_2$ using $5|C|$ Steiner points.

\begin{figure}[pt]
\centering
\includegraphics[width=.6\textwidth]{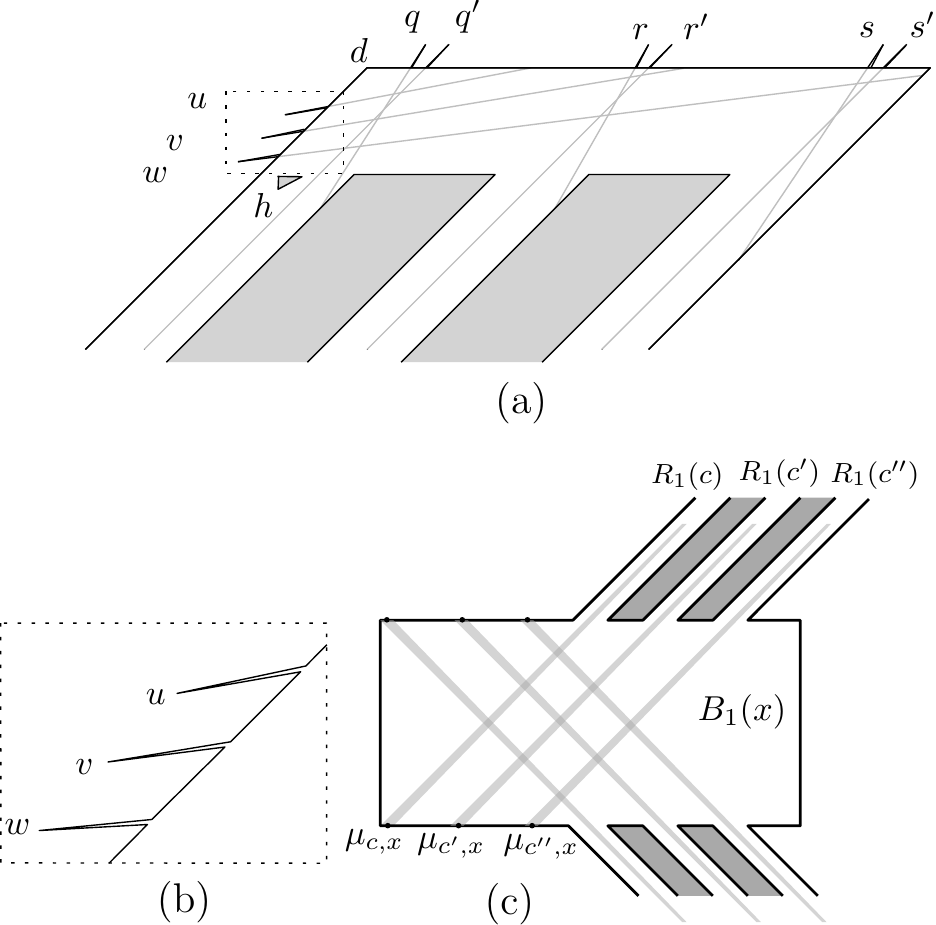}
\caption{(a) A clause region in $P_1$. (b) A close-up
 of the dents corresponding to $u,v,w$.  (c)   Illustration for $B_1(x)$. }
\label{cg}
\end{figure}
   
\subsection{Construction of Polygonal Region $P_1$}
\label{P1}
We modify a copy $\Gamma'_1$ of $\Gamma'$ to construct $P_1$.
First we create a \emph{channel} of small non-zero width around each connector so that we have a polygon with holes.  We denote the copies of $R(c)$ and $B(u)$ in $P_1$ by $R_1(c)$ and $B_1(u)$.
 We create nine dents with peaks $u,v,w,q,q',r,r',s,s'$ in the boundary of $R_1(c)$,
 as shown in Figures~\ref{cg}(a)--(b). The visibility region of each dent is illustrated using
  gray straight lines. 

As illustrated in  Figure~\ref{cg}(a), 
we  place a hole $h$ in the leftmost channel of $R_1(c)$, 
not intersecting the visibility regions of the peaks $u,v,w,q,q',r,r',s,s'$.
 We refer the reader to Section~\ref{appA} for the formal details of the construction 
and the precise placement of $h$.

We now modify the rectangles that correspond to the variables. 
 Let $x$ be a literal and let $B_1(x)$ be the corresponding rectangle 
 in $\Gamma'_1$. See Figure~\ref{cg}(c). For every positive   (resp., negative) 
 clause $c$ containing $x$, one or more\footnote{Recall that $c$ may contain duplicates of a literal.}
 visibility regions corresponding to the peaks of $R_1(c)$ enter $B_1(x)$.
We ensure that the visibility regions entering from the top (resp., bottom) of $B_1(x)$ are disjoint and only intersect the bottom (resp., top) side of $B_1(x)$.  For each clause $c$ containing $x$, we construct a vertex $\mu_{c,x}$ on the side of $B_1(x)$ such that $\mu_{c,x}$ is visible to the corresponding peak of $R_1(c)$.  
We refer to these newly constructed points as the \emph{$\mu$-points} of $B_1(x)$.


\begin{figure*}[htb]
\centering
\includegraphics[width=\textwidth]{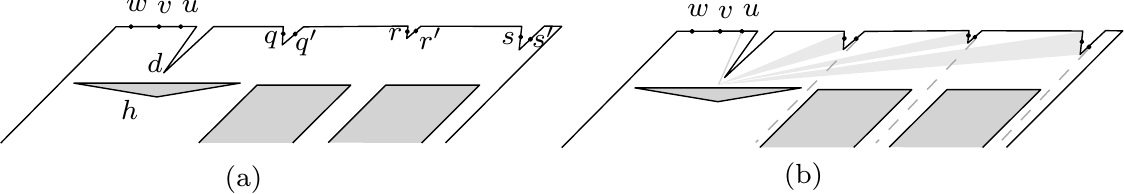}
\caption{(a) A clause region in $P_2$. (b) The vertex $u$ is not 1-bend 
 visible to $q',r',s'$.  }
\label{cg2}
\end{figure*}

\subsection{Construction of Polygonal Region $P_2$}
\label{P2}
We modify a copy $\Gamma'_2$ of $\Gamma'$ to construct $P_2$.
As in the construction of $P_1$, we create a channel of small non-zero width around each connector so that we have a polygon with holes.  We denote the copies of $R(c)$ and $B(u)$ in $P_2$ by $R_2(c)$ and $B_2(u)$.
 We create four inward dents on the boundary of $R_2(c)$, and place
 the points $u,v,w,d,q,q',r,r',s,s'$, as shown in Figure~\ref{cg2}(a).
 Finally, we place the hole $h$
 ensuring that no peak in $\{u,v,w\}$ is 1-bend visible to 
 $\{q',r',s'\}$, e.g., see  Figure~\ref{cg2}(b).  
 We refer the reader to Section~\ref{appA} for the formal details of the construction.
 
We now modify the rectangles that correspond to the literals. 
 Let $x$ be a literal and let $B_2(x)$ be the corresponding rectangle 
 in $\Gamma'_2$. The modification for $B_2(x)$
 is analogous to that of $B_1(x)$. Specifically, for every
 visibility region (of some peak $p\in \{q',r',s'\}$) that intersects 
 the box $B_1(x)$ in $\Gamma'_1$, we construct a point $\mu$
 on the boundary of box $B_2(x)$ such that  $\mu$ and $p$
 are  visible in  $P_2$. Figure~\ref{cg2}(b)
 illustrates such   visibilities with dashed lines. 

\subsection{Properties of Compatible Drawings}
In this section we prove some key 
properties of compatible triangulations $T(P_1)$ and $T(P_2)$ of $P_1$ and $P_2$, respectively.
For clause $c$, let $\overline R_1(c)$ be the clause region $R_1(c)$ plus its three attached channels.

\begin{lemma}
If $c$ is a clause such that no peak $q',r',s'$ is adjacent in $T(P_1)$ to a point 
outside $\overline R_1(c)$,
then there are at least 6 Steiner points in $\overline R_1(c)$.
\label{lem:positive0}
\end{lemma}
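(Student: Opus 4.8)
The plan is to extract the six Steiner points from six of the nine dents of $R_1(c)$, namely the dents with peaks $q,q',r,r',s,s'$, by applying Lemma~\ref{lem:basic} to each of these dents and then checking that the resulting Steiner points all lie in $\overline R_1(c)$ and are pairwise distinct. The remaining dents, with peaks $u,v,w$, play a different role in the reduction and are not counted here: the construction of $R_2(c)$ need not block the line of sight between their reflex neighbours.

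First I would verify the hypothesis of Lemma~\ref{lem:basic} for a dent of $R_1(c)$ with peak $p\in\{q,q',r,r',s,s'\}$ and reflex neighbours $a,b$: namely that $a$ and $b$ are not visible in $P_2$. This is the step I expect to be the main obstacle, since it has to be read off the detailed construction of $P_2$ in Section~\ref{P2} (with coordinates pinned down in Section~\ref{appA}). The four inward dents on $\partial R_2(c)$, together with the placement of the hole $h$, are arranged precisely so that the segment $ab$ is obstructed inside $P_2$; for $p\in\{q',r',s'\}$ the decisive obstruction is the condition that no peak of $\{u,v,w\}$ is $1$-bend visible to $\{q',r',s'\}$ in $P_2$, while for $p\in\{q,r,s\}$ a simpler obstruction coming from the inward dents suffices. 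Granting this, Lemma~\ref{lem:basic} gives, for each such $p$, an edge of $T(P_1)$ joining $p$ to either a Steiner point or an original vertex different from $a,b$ lying strictly inside the visibility region $\Psi_p$ of $p$.

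Next I would rule out the ``original vertex'' option. By construction $\Psi_q,\Psi_r,\Psi_s$ are contained in $\overline R_1(c)$ and contain no vertex of $P_1$ other than the reflex neighbours of their own dents, so $q$, $r$ and $s$ are each adjacent in $T(P_1)$ to a Steiner point inside $\overline R_1(c)$. The regions $\Psi_{q'},\Psi_{r'},\Psi_{s'}$ do extend out of $\overline R_1(c)$ into variable regions, where $\mu$-points sit, but the only original vertices they contain, besides their own reflex neighbours, are such $\mu$-points, all of which lie outside $\overline R_1(c)$. Since the hypothesis of the lemma forbids $q'$, $r'$ and $s'$ from being adjacent to any point outside $\overline R_1(c)$, and $\Psi_p\cap\overline R_1(c)$ contains no original vertex for $p\in\{q',r',s'\}$, each of $q'$, $r'$ and $s'$ is likewise adjacent in $T(P_1)$ to a Steiner point inside $\overline R_1(c)$.

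Finally, the six Steiner points obtained this way are distinct: a Steiner point adjacent to two distinct peaks $p,p'\in\{q,q',r,r',s,s'\}$ would have to lie in $\Psi_p\cap\Psi_{p'}$, but the visibility regions of the nine peaks of $R_1(c)$ are pairwise disjoint inside $\overline R_1(c)$ by construction (the hole $h$ is planted so as to miss all of them, and the dents occupy separate pockets of $\partial R_1(c)$). Hence $\overline R_1(c)$ contains at least six Steiner points.
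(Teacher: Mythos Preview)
Your final step contains a genuine error. You assert that the visibility regions of the nine peaks are pairwise disjoint inside $\overline R_1(c)$, but the construction is arranged precisely so that they are \emph{not}: in Section~\ref{appA} a point $a$ is chosen on the visibility line of $q'$, and the visibility line of $q$ is then routed through $a$; likewise $\Psi_r\cap\Psi_{r'}$ and $\Psi_s\cap\Psi_{s'}$ are nonempty by design (and further pairs among the nine overlap as well). A single Steiner point placed near $a$ can therefore serve both $q$ and $q'$, so from your six peaks you can extract at most three Steiner points, not six.

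The paper's proof needs all nine peaks, and your dismissal of $u,v,w$ is also mistaken: $R_2(c)$ is built so that Lemma~\ref{lem:basic} applies to them too. With nine peaks each requiring a Steiner-point neighbour in $\overline R_1(c)$, and no point of $\overline R_1(c)$ visible to more than two peaks, one gets $\lceil 9/2\rceil=5$ immediately. Reaching $6$ then requires a matching argument: five Steiner points would suffice only if four of them were each shared by two peaks, i.e.\ only if the graph $H$ (pairs of peaks that can share a Steiner point in both $P_1$ and $P_2$) had a matching of size $4$. But $H$ is bipartite with vertex cover $\{q,r,s\}$, hence maximum matching $3$, so five Steiner points are impossible. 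The overlap of visibility regions is not an inconvenience to be assumed away; it is exactly why the matching/vertex-cover step is needed.
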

\begin{proof}
Consider the 9 points $\{u,v,w,q,q',r,r',s,s'\}$.  In $P_1$ each point in this set is the peak of a dent, so
by Lemma~\ref{lem:basic}, each of these 9 points must be adjacent in $T(P_1)$ to a vertex or a Steiner point. 
The only vertices visible to any of the 9 peaks are the $\mu$-points visible to $q',r',s'$, but they lie outside  $\overline R_1(c)$.  We assumed there is no edge from $q',r',s'$ to a point outside $\overline R_1(c)$.  The other 6 peaks are not visible to any point outside $\overline R_1(c)$. 
Thus each of the 9 peaks must be adjacent to a Steiner point in $\overline R_1(c)$. 
No point in $\overline R_1(c)$  is visible to more than two peaks.  Thus we need at least 
$\lceil \frac{9}{2} \rceil = 5$ Steiner points.  
The only way that 5 Steiner points suffice is to use 4 Steiner points that are each adjacent to two peaks.   Pairs of peaks that are visible to a common point in both $P_1$ and $P_2$ are indicated by edges in the graph $H$ shown in Figure~\ref{fig:cover}(a).  We require a matching of size 4 in $H$.  Observe that $H$ is bipartite so the maximum size of a matching is equal to the minimum size of a vertex cover.  The set $\{q,r,s\}$ is a vertex cover of size 3.  Thus there is no matching of size 4, and the Lemma follows.
\end{proof}

\begin{lemma}
For any clause $c$, there are at least 5 Steiner points in $\overline R_1(c)$.
\label{lem:positive}
\end{lemma}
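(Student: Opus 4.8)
The plan is a case analysis on the number $k$ of peaks among $q',r',s'$ that are adjacent in $T(P_1)$ to a point lying outside $\overline R_1(c)$. First I would recall, verbatim from the proof of Lemma~\ref{lem:positive0}, the facts already available: by Lemma~\ref{lem:basic} each of the nine peaks $\{u,v,w,q,q',r,r',s,s'\}$ is adjacent in $T(P_1)$ to a Steiner point or a vertex; the only vertices that any peak can see are the $\mu$-points, and those are visible only to $q',r',s'$ and lie outside $\overline R_1(c)$; the six peaks $u,v,w,q,r,s$ see no point outside $\overline R_1(c)$ at all; no point of $\overline R_1(c)$ is visible to more than two peaks; and, crucially, if one Steiner point is adjacent to two peaks then those two peaks are co-visible to that point in $P_1$ \emph{and}, by compatibility, in $P_2$, so they form an edge of the graph $H$ of Figure~\ref{fig:cover}(a). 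Since $\{q,r,s\}$ is a vertex cover of $H$, its complement $\{u,v,w,q',r',s'\}$ is an independent set of $H$.

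The key step, and the main obstacle, is to prove that $k\le 1$: at most one of $q',r',s'$ is joined in $T(P_1)$ to a point outside $\overline R_1(c)$. Here I would appeal to the construction of $P_2$ (Section~\ref{appA}): writing $\overline R_2(c)$ for the copy of $\overline R_1(c)$ in $P_2$, the placement of the hole $h$ and the $45^\circ$ slant of $R_2(c)$ are chosen so that, for any two distinct peaks among $q',r',s'$, every segment from one of them to a point of $P_2$ outside $\overline R_2(c)$ crosses every segment from the other to a point of $P_2$ outside $\overline R_2(c)$. Because $T(P_2)$ is a planar straight-line triangulation it contains at most one such edge, and since $T(P_1)$ and $T(P_2)$ are compatible (and lying inside $\overline R_i(c)$ is a combinatorial property, preserved between the two drawings), $T(P_1)$ likewise contains at most one edge from $\{q',r',s'\}$ to a vertex outside $\overline R_1(c)$. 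This is exactly the place where the geometry built into $P_2$ — in particular the hole $h$ that destroys all the ``$\{u,v,w\}$-to-$\{q',r',s'\}$'' $1$-bend visibilities, and the slant that makes the escape edges pairwise crossing — does the work; without it two or three of $q',r',s'$ could escape and only four Steiner points would be forced, so getting $k\le1$ is the crux.

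Granting $k\le 1$, the conclusion follows quickly. Let $Y$ consist of $u,v,w$ together with the at least two peaks among $q',r',s'$ that are not adjacent to any point outside $\overline R_1(c)$, so $|Y|\ge 5$ and $Y\subseteq\{u,v,w,q',r',s'\}$. By Lemma~\ref{lem:basic} and the recalled facts, every peak of $Y$ is adjacent in $T(P_1)$ to a Steiner point: for $u,v,w$ because the alternative (a visible vertex) would have to be a $\mu$-point, which they cannot see; for the surviving elements of $\{q',r',s'\}$ because, being adjacent to no outside point, they are adjacent to no $\mu$-point, so the only remaining option in Lemma~\ref{lem:basic} is a Steiner point, and that Steiner point necessarily lies inside $\overline R_1(c)$ (again because no point outside $\overline R_1(c)$ is adjacent to these peaks). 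Finally, no single Steiner point is adjacent to two peaks of $Y$, for such a pair would be an edge of $H$, contradicting that $Y$ is an independent set of $H$. Hence the $\ge 5$ peaks of $Y$ are matched to distinct Steiner points of $\overline R_1(c)$, which gives at least $5$ Steiner points in $\overline R_1(c)$. (Taking $k=0$, the same argument with $|Y|=6$ re-proves Lemma~\ref{lem:positive0}.)
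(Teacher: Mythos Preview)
Your central claim---that at most one of $q',r',s'$ can be adjacent in $T(P_1)$ to a point outside $\overline R_1(c)$---is false, and the whole argument collapses without it. The three channels attached to $R_i(c)$ are \emph{disjoint} in both $P_1$ and $P_2$; an edge from $q'$ to its $\mu$-point runs through the left channel, an edge from $r'$ through the middle channel, and an edge from $s'$ through the right channel, so these edges are pairwise non-crossing in both drawings. The hole $h$ in $P_2$ is placed to destroy $1$-bend visibility between $\{u,v,w\}$ and $\{q',r',s'\}$ (that is what justifies certain non-edges of $H$); it does \emph{not} force the escape edges of $q',r',s'$ to cross one another. Indeed, the ``if'' direction of Theorem~\ref{thm:nph} (Figures~\ref{fig:easy-dir}(c)--(d), the case $x_q=x_r=x_s=\text{true}$) explicitly builds compatible triangulations with exactly $5$ Steiner points in which \emph{all three} of $q',r',s'$ are joined to their $\mu$-points. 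On that instance your bound would give only $|Y|=3$ Steiner points.

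The paper's proof does not attempt to bound $k$. Instead it treats $k\ge 2$ geometrically: an edge from $q'$ to a point outside $\overline R_1(c)$ physically cuts across the visibility cones of $v$ and $w$, deleting the corresponding edges of $H$; an edge from $r'$ (when $q'$ has none) cuts the visibility cone of $w$. After these deletions a short matching/vertex-cover count on the modified $H$ still yields five Steiner points. That geometric ``cutting'' step---escape edges of $q',r',s'$ blocking the visibility regions of $u,v,w$---is the missing idea in your proposal.
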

\begin{proof}  Consider the triangulation of $P_1$.
The case where no peak $q',r',s'$ has an incident edge to a point outside 
$\overline R_1(c)$ is covered by Lemma~\ref{lem:positive0}.  It remains to consider the cases when there is such an edge. 

Our argument will be partly about the graph $H$ (in Figure~\ref{fig:cover}(a)) of pairs of peaks that are visible to a common point in both $P_1$ and $P_2$, and partly about the geometry of $P_1$.
First we note that the argument used above in the proof of Lemma~\ref{lem:positive0} can be strengthened to show that if we use just one edge from a peak to a point outside the clause region then we still need 5 Steiner points inside the region.  In graph $H$, observe that if one of $q',r',s'$ is removed, then we have 8 vertices, and a maximum matching of size 3, which means that we can use 3 edges (Steiner points) to cover 6 vertices, leaving 2 vertices that need one Steiner point each, for a total of 5 Steiner points.   It remains to consider the cases where at least two of the points $q',r',s'$ have an incident edge to a point outside the clause region.  We deal with the case where $q'$ has such an edge and the case where $r'$ has such an edge but $q'$ does not.   

Suppose there is an edge $e$ from $q'$ to a point outside $\overline R_1(c)$.  Observe that edge $e$ cuts off the visibility regions of $v$ and $w$.  
The effect on graph $H$ is to remove the edges of $H$ incident to $v$ and $w$, e.g., see Figure~\ref{fig:cover}(b).  Thus we need one Steiner point for each of $v$ and $w$, one Steiner point for $r$ (irrespective of how $r'$ is connected), one Steiner point for $s$ and one more for $u$, a total of at least 5.

Next suppose there is no edge from $q'$ to a point outside of $\overline R_1(c)$, but there is an edge $e'$ from $r'$ to a point outside of $\overline R_1(c)$.   The edge $e'$ cuts off the visibility region of $w$.   The effect on graph $H$ is to remove the edges $(w,r)$ and $(w,s)$, e.g., see Figure~\ref{fig:cover}(c).    We then need a Steiner point for $s$ (irrespective of how $s'$ is connected), and for the remaining 6 vertices $\{u,v,w,q,q',r\}$, we have a subgraph with a minimum vertex cover $\{q,r\}$ of size 2, thus a maximum matching of 2 edges (Steiner points) to cover 4 vertices, leaving 2 vertices that need one Steiner point each, for a total of 5 Steiner points.
\end{proof}

\begin{lemma}
If $T(P_1)$ and $T(P_2)$ use $5|C|$ Steiner points each, then for any clause $c$, there is an edge in $T(P_1)$ from at least one of $q', r', s'$ to a $\mu$-point.
%
\label{lem:positive2}
\end{lemma}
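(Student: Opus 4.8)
The plan is to combine Lemma~\ref{lem:positive} with a global count of Steiner points and then a short case analysis of how the peaks $q',r',s'$ are attached. The first observation is that the regions $\overline R_1(c)$, taken over all clauses $c$, are pairwise interior-disjoint: their channels are built around distinct connectors of $G_{I'}$, and a channel reaches only up to the boundary of a variable region, never into its interior. Since $T(P_1)$ uses $5|C|$ Steiner points in total and, by Lemma~\ref{lem:positive}, each of the $|C|$ regions $\overline R_1(c)$ contains at least $5$ of them, each $\overline R_1(c)$ contains \emph{exactly} $5$ Steiner points and no Steiner point of $T(P_1)$ lies outside $\bigcup_c \overline R_1(c)$. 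In particular, no Steiner point lies in the interior of any variable region $B_1(x)$.

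Now fix a clause $c$ and suppose, for contradiction, that no edge of $T(P_1)$ joins one of $q',r',s'$ to a $\mu$-point. I would distinguish two cases according to whether any of $q',r',s'$ has an incident edge leaving $\overline R_1(c)$. If none of the three peaks has such an edge, then the hypothesis of Lemma~\ref{lem:positive0} is met, so $\overline R_1(c)$ must contain at least $6$ Steiner points, contradicting the exact count of $5$ established above.

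In the remaining case, some peak $p\in\{q',r',s'\}$ is adjacent in $T(P_1)$ to a point $t$ that lies outside $\overline R_1(c)$. Since the edge $(p,t)$ lies in $P_1$, the point $t$ is visible to $p$ and hence belongs to the visibility region of $p$. By the construction of $P_1$ (Sections~\ref{P1} and~\ref{appA}), the portion of this visibility region lying outside $\overline R_1(c)$ is confined to a single variable region $B_1(x)$, stays away from the channel mouths of all other clause regions, and contains no vertex of $P_1$ except $\mu$-points of $B_1(x)$. Combined with the first paragraph, this portion contains no Steiner point either. Therefore $t$ is forced to be a $\mu$-point, contradicting our standing assumption, and the lemma follows.

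The step I expect to be the main obstacle is the geometric claim in the last paragraph: one must verify that once the visibility region of a peak $p\in\{q',r',s'\}$ leaves $\overline R_1(c)$ it cannot reach another clause region or see any vertex other than the designated $\mu$-points on the boundary of $B_1(x)$. This is exactly what the careful placement of the channels, the peaks $q',r',s'$, and the $\mu$-points in the construction is designed to guarantee, so the difficulty lies in invoking those properties precisely rather than in any new idea; the counting step and the appeals to Lemmas~\ref{lem:positive} and~\ref{lem:positive0} are routine.
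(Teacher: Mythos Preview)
Your proposal is correct and follows essentially the same route as the paper's proof: the global count via Lemma~\ref{lem:positive} forces exactly five Steiner points per $\overline R_1(c)$ and none in the variable regions, and then Lemma~\ref{lem:positive0} yields the contradiction. The only difference is that you split the contradiction into two cases, whereas the paper compresses your ``remaining case'' into the single sentence ``Then there is no edge from $q', r'$ or $s'$ to a point outside $\overline R_1(c)$''---your extra paragraph simply makes explicit the geometric justification (that outside $\overline R_1(c)$ the visibility region of each such peak meets only $\mu$-points and contains no Steiner point) that the paper leaves implicit.
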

\begin{proof}
By Lemma~\ref{lem:positive} every  region $\overline R_1(c)$ has at least 5 Steiner points.  Thus every such region must have exactly 5 Steiner points and there are no Steiner points 
in the variable regions. 
Suppose there is a clause $c$ such that $T(P_1)$ has no edge from $q', r'$ or $s'$ to a $\mu$-point.  Then there is no edge from  $q', r'$ or $s'$ to a point outside $\overline R_1(c)$.  But then by Lemma~\ref{lem:positive0} the clause region must have at least 6 Steiner points, a contradiction.
\end{proof}


\begin{figure}[ptb]
\centering
\includegraphics[width=.5\textwidth]{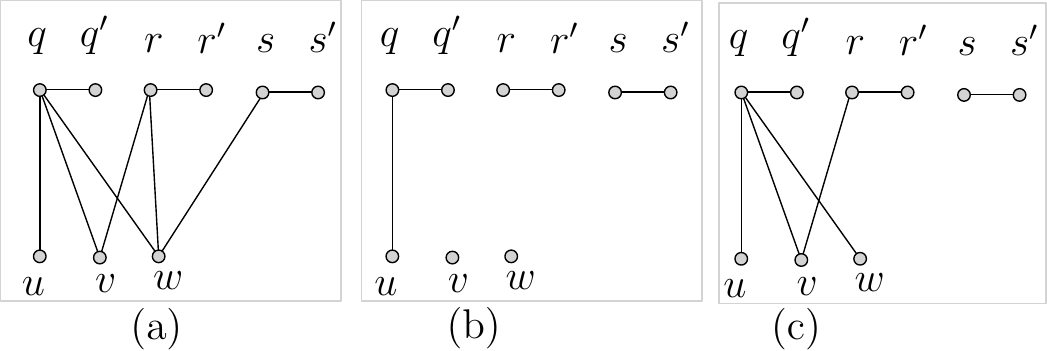}
\caption{(a) Graph $H$ of pairs of peaks that that are visible to a common Steiner point in both $P_1$ and $P_2$. (b)--(c) Illustration for Lemma~\ref{lem:positive}.
}
\label{fig:cover}
\end{figure}

\subsection{Reduction}
\begin{theorem}
\label{thm:nph}
The following problem is NP-hard:
Given a pair of compatible polygonal regions $P_1,P_2$, 
and $k \in {\mathbb N}$, decide if $P_1$ and $P_2$ have compatible triangulations with at most $k$ Steiner points.
\end{theorem}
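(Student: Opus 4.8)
The plan is to complete the reduction from MRP-3SAT by establishing both directions of the equivalence, using the machinery already developed in Lemmas~\ref{lem:basic}--\ref{lem:positive2}. We set $k = 5|C|$ and must show that $I$ is satisfiable if and only if $P_1, P_2$ admit compatible triangulations with at most $k$ Steiner points. First I would dispense with the easy direction: given compatible triangulations using at most $5|C|$ Steiner points, Lemma~\ref{lem:positive} forces exactly $5$ Steiner points inside each $\overline R_1(c)$ and none in the variable regions, and then Lemma~\ref{lem:positive2} produces, for each clause $c$, an edge $e_c$ in $T(P_1)$ from one of $q',r',s'$ to a $\mu$-point $\mu_{c,x}$ lying on a connector of $B_1(x)$. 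I would set $x$ to true if $c$ is positive and to false if $c$ is negative; the edge $e_c$ witnesses that this literal satisfies $c$. The consistency of this assignment is the geometric crux of this direction: I must argue that no variable is forced both true and false. This follows from the construction guarantee (stated in the reduction sketch) that the extension of every top connector of $B_1(x)$ crosses the extensions of all bottom connectors inside $B_1(x)$, so an edge along a top connector and an edge along a bottom connector would cross inside the variable region, which a planar triangulation forbids; hence all the $e_c$-edges hitting a given $B_1(x)$ enter from the same side, giving a single consistent truth value.

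For the converse, I start from a satisfying assignment and must explicitly build compatible triangulations of $P_1$ and $P_2$ using exactly $5|C|$ Steiner points. The idea is local: I handle each clause region independently together with its three channels, and I handle the variable regions and the remaining ``background'' region of $\Gamma'$ in a routine way that introduces no Steiner points (both copies being combinatorially the same subdivision of $\Gamma'$, a common triangulation of the complement of the clause/variable gadgets exists without Steiner points, or with a number that can be absorbed — I would phrase the construction so the background contributes zero). Inside a clause region $c$, I pick a satisfying literal $x$, route an edge $e_c$ from the appropriate peak (say $q'$, picking the connector corresponding to $x$) to the point $\mu_{c,x}$ on $B_1(x)$, and then triangulate $\overline R_1(c)$ in both $P_1$ and $P_2$ with the five Steiner points placed at the common-visibility points indicated by a size-$4$ matching in $H$ together with the one remaining peak; I must check that the same combinatorial triangulation is realizable as a straight-line triangulation in $R_2(c)$ using the inward-dent geometry of Figure~\ref{cg2}, which is where I rely on the fact that $u,v,w$ are not $1$-bend visible to $q',r',s'$ in $P_2$ so the forced adjacency structure matches. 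The bookkeeping here — exhibiting explicit straight-line triangulations of both gadget copies that are face-by-face compatible — is the most laborious part, but it is a finite case check per gadget type (positive vs.\ negative clause, and which of the three literals is chosen).

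The main obstacle I anticipate is the converse direction's geometric realizability, not the counting: Lemma~\ref{lem:positive} already nails down the lower bound of $5$ Steiner points per clause region robustly via the vertex-cover argument on $H$, so the ``only if'' side rests on a clean planarity/crossing argument. The delicate point is ensuring that the triangulation I build for the satisfied-clause gadget in $P_1$ and the one in $P_2$ are genuinely compatible — same clockwise cyclic order around every face — given that $P_1$ uses nine dents and $P_2$ uses four inward dents with a differently placed hole $h$. I would address this by designing the five Steiner points in each gadget to play identical combinatorial roles in the two copies (each Steiner point adjacent to the same pair of peaks, plus the edge $e_c$ connecting the same peak to the same $\mu$-point), and then verifying that the induced quadrilateral/pentagonal sub-faces can be diagonalized identically in both; the non-$1$-bend-visibility condition in $P_2$ and the non-intersection of visibility regions with the hole in $P_1$ are precisely the hypotheses that make this local matching go through. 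Finally I note that all coordinates produced are rational of polynomial bit-complexity since they come from perturbing the polynomial-size rectilinear drawing $\Gamma$ by fixed $45^\circ$ slants and small rational offsets, so the reduction is polynomial, completing the NP-hardness proof.
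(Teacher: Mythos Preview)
Your plan follows essentially the same route as the paper: the ``only if'' direction extracts an assignment from Lemma~\ref{lem:positive2} and proves consistency via the crossing argument in the variable boxes, and the ``if'' direction builds explicit compatible triangulations of each clause gadget by a finite case analysis on which literal is satisfied, together with a polynomial bound on coordinate size.

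There is one concrete error in your sketch of the construction direction. You say the five Steiner points in a clause gadget come from ``a size-$4$ matching in $H$ together with the one remaining peak.'' But the whole content of Lemma~\ref{lem:positive0} is that $H$ has \emph{no} matching of size~$4$ (the set $\{q,r,s\}$ is a vertex cover of size~$3$). The reason five Steiner points suffice is different: once the edge $e_c$ sends one of $q',r',s'$ to its $\mu$-point, that peak no longer needs a Steiner neighbour, and on the remaining eight peaks the relevant graph has maximum matching~$3$, so five Steiner points arise as three ``shared'' points (each visible to two peaks) plus two ``solo'' points---not four plus one. Moreover, the edge $e_c$ geometrically cuts off some visibility regions (e.g.\ an edge from $q'$ kills the usable visibility of $v$ and $w$), so the case analysis is genuinely sensitive to \emph{which} literal is chosen; the paper handles this by drawing explicit compatible triangulations for each of the three cases rather than by a uniform matching argument. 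Your ``finite case check per gadget type'' is exactly what is needed---just drop the size-$4$-matching heuristic, which cannot work as stated.
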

\begin{proof}
Let $I=(U,C)$  be an instance of MRP-3SAT, and let $P_1$ and $P_2$ be the corresponding 
 compatible polygons, as described in Sections~\ref{P1}--\ref{P2}.
  Section~\ref{appA} presents further details on how to construct $P_1$ and $P_2$ using a polynomial number of bits, so this is a polynomial-time reduction.
 We now  prove that $P_1$  and $P_2$ admit a pair of  compatible triangulations, each 
 with at most $5|C|$ Steiner points, if and only if $I$ admits a satisfying truth assignment.

\begin{figure}[pt]
\centering
\includegraphics[width=.5\textwidth]{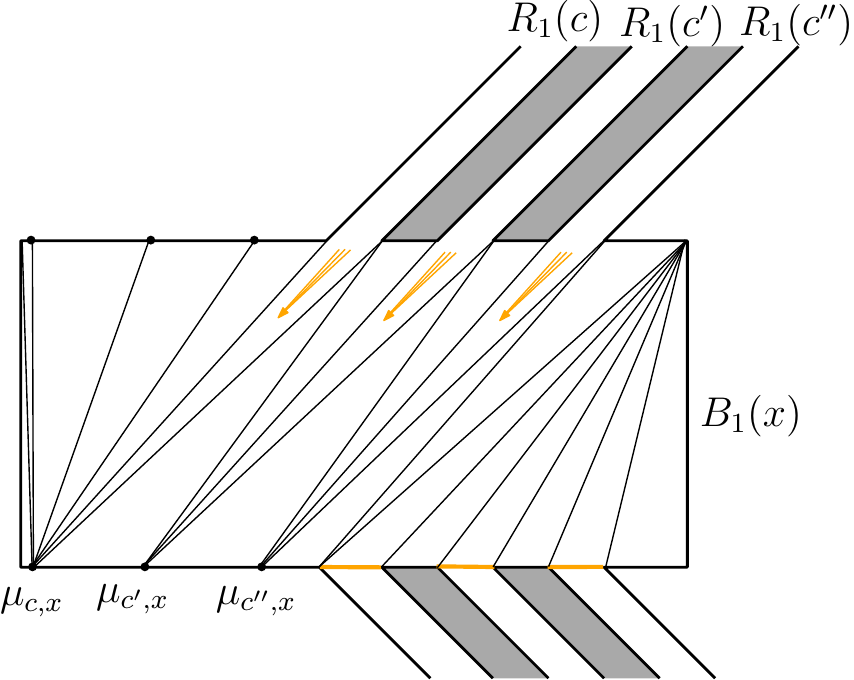}
\caption{A triangulation for $B_1(x)$, where $x={\it true}$.}
\label{fig:easy-dir3}
\end{figure}

We first assume that $P_1$ and $P_2$ admit compatible triangulations with at most $5|C|$ Steiner points.   By Lemma~\ref{lem:positive2}, for any clause $c$ there is an edge in the triangulation of $P_1$ from at least one peak $z \in \{q',r',s'\}$ to  a $\mu$-point, say $\mu_{c,x}$. 
We use the edge  $(z,\mu_{c,x})$   to 
assign a truth value to variable $x$.
 If $c$ is a positive (resp., negative) clause, then
 we set $x$ to true (resp., false).
Clearly we have satisfied each clause.  
 If there is a variable $u$ whose truth value is not assigned yet,
 then setting the truth value of $u$ arbitrarily would still keep 
 the clauses satisfied. 
 It remains to show 
that the truth-value assignment is consistent.
Suppose
 there is a variable  $u$ such that some clause $c$ forces $u$ to be true,
 and some other clause $c'$ forces $u$ to be false.
 Without loss of generality we may assume that $c$ is positive and $c'$ is
 negative. Consequently, in each of $R_1(c)$ and $R_1(c')$, there exists a peak 
 that is incident to some $\mu$-point in $B_1(x)$. By construction of the 
 $\mu$-points in $B_1(x)$, the two corresponding edges cross, a contradiction.

Assume now that $I$ admits a satisfying truth assignment.  We will 
 find corresponding compatible triangulations of $P_1$ and $P_2$. For each variable $x$,
 if $x$ is set to true, then we close the channels of the 
 negative clauses and construct the compatible triangulations of  
 the rectangles $B_1(x)$ and $B_2(x)$ using
 the $\mu$-points on the bottom side of 
 these rectangles, e.g., see  Figure~\ref{fig:easy-dir3}.
 The construction when  $x$ is set to false is symmetric.

Since every clause contains at least one true literal, for every 
 clause $c$, there exist one or more peaks in $P_1$ that are 
 visible to their corresponding $\mu$-points. 
 We show that in each scenario, the corresponding clause gadgets can be triangulated in a compatible fashion.

 Section~\ref{appA} contains 
 the remaining details of the proof of Theorem~\ref{thm:nph}. These details concern the second half of the proof, where we assume that $I$ admits a satisfying truth assignment and we find corresponding compatible triangulations of $P_1$ and $P_2$. As explained in the main text,  for each variable $x$, if $x$ is set to true, then we close the channels of the 
 negative clauses and construct the compatible triangulations of  
 the rectangles $B_1(x)$ and $B_2(x)$ using the $\mu$-points on the bottom side of these rectangles, e.g., see  Figure~\ref{fig:easy-dir3}. The construction when  $x$ is set to false is symmetric.

Since every clause contains at least one true literal, for every 
 clause $c$, there exist one or more peaks in $P_1$ that are 
 visible to their corresponding $\mu$-points. 
It remains to show that in each scenario, the 
corresponding clause gadgets can be 
 triangulated in a compatible fashion.  
%
%
%
%
 We only describe the case when $c$ is a positive clause. 
 The case when $c$ is negative is symmetric.
 
Let $x_{q},x_{r},x_{s}$ be the literals of $c$, and 
  assume that their corresponding peaks $q',r',s'$ appear in this  
 order  from left to  right in $R_1(c)$. 

\textbf{Case 1 ($x_{q}={\it true}$):} 
  Figures~\ref{fig:easy-dir}(a)--(b) illustrate the 
  compatible triangulations of $R_1(c)$ and $R_2(c)$ for the 
  case when  $x_{q}={\it true}$ and $x_{r}=x_{s} = {\it false}$. 
  Figures~\ref{fig:easy-dir}(c)--(d) illustrate the 
  compatible triangulations of $R_1(c)$ and $R_2(c)$ for the 
  case when  $x_{q}={\it true}$ and $x_{r}=x_{s} = {\it true}$.
  The scenarios when $x_{r}={\it true}$ and $x_{s}={\it false}$, or vice versa,
  can be handled  by switching between the local configurations 
  corresponding to the true and false values. 

\textbf{Case 2 ($x_{q}={\it false}$, $x_{r}={\it true}$):} 
  Figures~\ref{fig:easy-dir2}(a)--(b) illustrate the 
  compatible triangulations of $R_1(c)$ and $R_2(c)$ for the 
  case when  $x_{s} = {\it true}$.   The scenario when $x_{s} = {\it false}$ 
  can  be handled  by switching between the local configurations 
  corresponding to the true and false values. 
  
\textbf{Case 3 ($x_{q}=x_{r}={\it false}$, $x_{s}={\it true}$):}   
  The  compatible triangulations of $R_1(c)$ and $R_2(c)$ for this case
   are illustrated in Figures~\ref{fig:easy-dir2}(c)--(d).

\begin{figure}[pt]
\centering
\includegraphics[width=.9\textwidth]{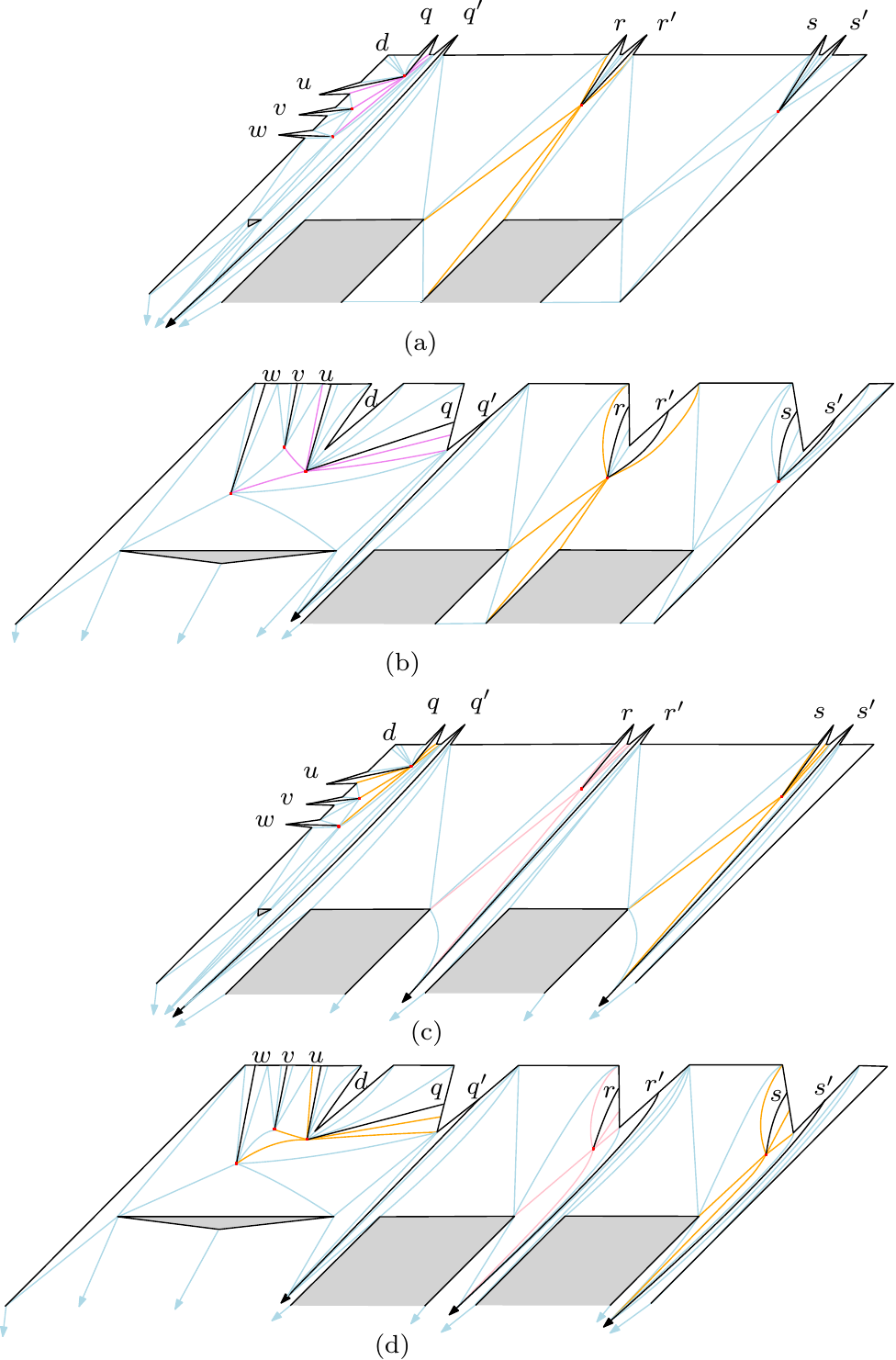}
\caption{Illustration for  Case 1. Colors are used  to 
 better illustrate the correspondence between the two drawings.  Some edges are drawn curved but that is only to make the drawing more readable.}
\label{fig:easy-dir}
\end{figure}

\begin{figure}[pt]
\centering
\includegraphics[width=.9\textwidth]{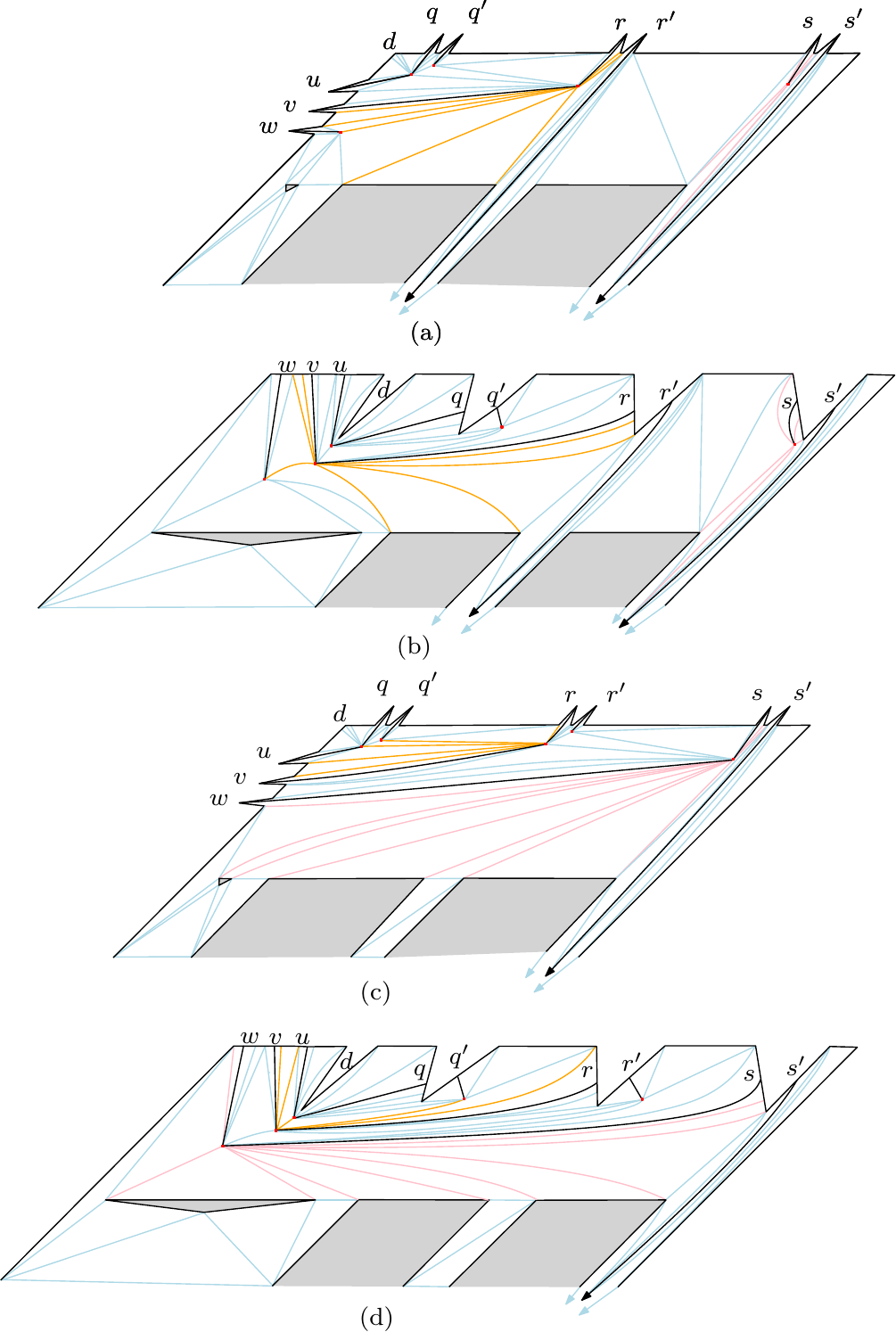}
\caption{Illustration for (a)--(b) Case 2, and  (c)--(d) Case 3. Some edges are drawn curved but that is only to make the drawing more readable.}
\label{fig:easy-dir2}
\end{figure}
%
%
\end{proof}


\section{Polynomial-time Construction}
\label{appA}
In this section we describe  the construction details of $P_1$ and $P_2$. Furthermore, we show that the construction can be accomplished in polynomial time, in particular, with a polynomial number of bits for the coordinates of all points.

\begin{figure}[pt]
\centering
\includegraphics[width=.7\textwidth]{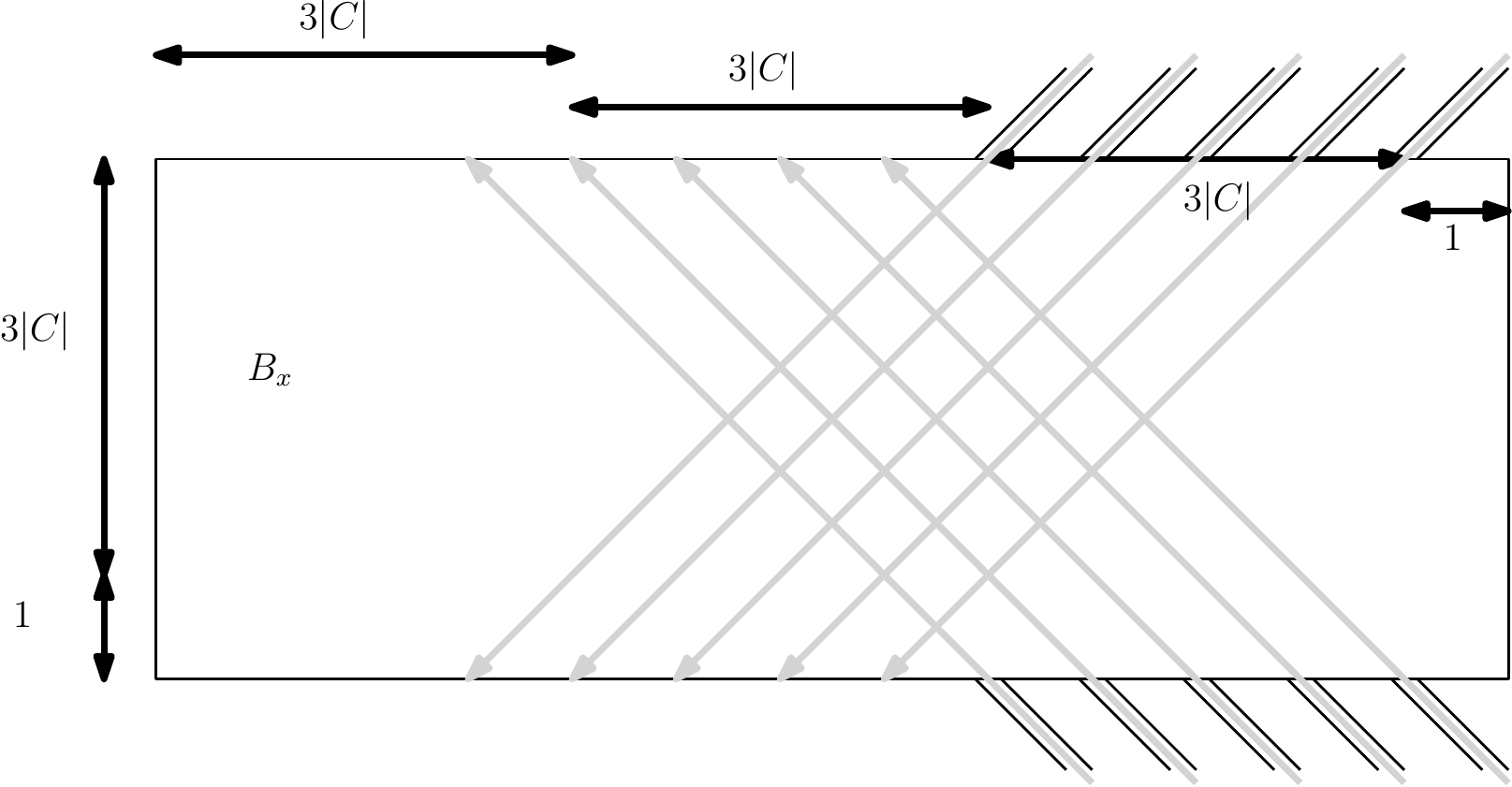}
\caption{Transforming $\Gamma$ into $\Gamma'$.}
\label{fig:easydetails}
\end{figure}

The construction has two stages.  In the first stage we construct $\Gamma'$ from 
$\Gamma$ (see Figures~\ref{np}(a)--(b)) and in the second stage we construct $P_1$ and $P_2$ from copies of $\Gamma'$ by adding the appropriate dents (see Figures~\ref{cg} and~\ref{cg2}).

For the first stage we claim that $\Gamma'$ can be constructed on a polynomial-sized grid (thus, with a logarithmic number of bits per coordinate).  
We make each variable rectangle $B(x)$ of width $9|C|+1$ and height $3|C|+1$,  as 
 illustrated in Figure~\ref{fig:easydetails}. 
The length allocated for channel attachments is $3|C|$ to incorporate possible variable duplicates.
 The distance between successive $\mu$-points on the bottom (top) of  $B(x)$ is at least 1.
 We make each clause parallelogram $R_c$ of height 1.
 The resulting drawing $\Gamma'$ has height $O(|U|+|C|)$ and width $O(|U||C|)$.
 
 We now turn to the second stage, the construction of $P_1$ and $P_2$.  
 We refer the reader to Figure~\ref{fig:details}, which illustrates for each positive clause $c$, the 
 correspondence between  the vertices of $R_1(c)$ and $R_2(c)$.  
Observe that the left, top, and right sides of holes $h_1$ and $h_2$ have no dents added to them and remain straight line segments. 

 We construct the peaks $u,v,w,q,q',r,r',s,s'$ of $R_1(c)$ iteratively.  
Our plan is to first construct the visibility lines of the peaks (see Figure~\ref{clause-gadget-constr}) and later enlarge these to visibility cones.  
Start by placing the points $q', r', s'$ above the top boundary of $R_1(c)$ 
  so that the lines from them to the corresponding $\mu$-points are parallel to the channel sides and centered in the channels.  
  Next, place points $u,v,w$  to the left of $R_1(c)$ and in the top half of $R_1(c)$, and choose points $u',v',w'$ on the boundary of $B(x)$ to be the endpoints 
 of the visibility lines emanating from $u,v,w$ respectively.  
Specifically, choose $u'$ and $v'$ by taking the midpoints of the tops of $h_1$ and $h_2$ and projecting upward at $45^\circ$.
%

\begin{figure}[ht]
\centering
\includegraphics[width=.8\textwidth]{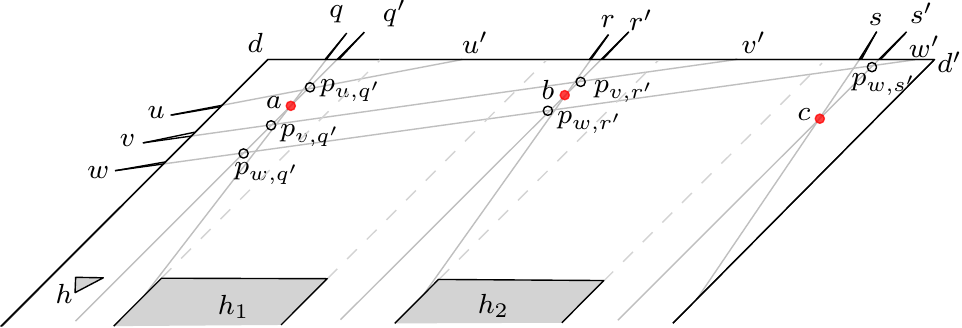}
\caption{Details of the construction of $R_1(c)$. 
}
\label{clause-gadget-constr}
\end{figure}

\begin{figure}[ht]
\centering
\includegraphics[width=.8\textwidth]{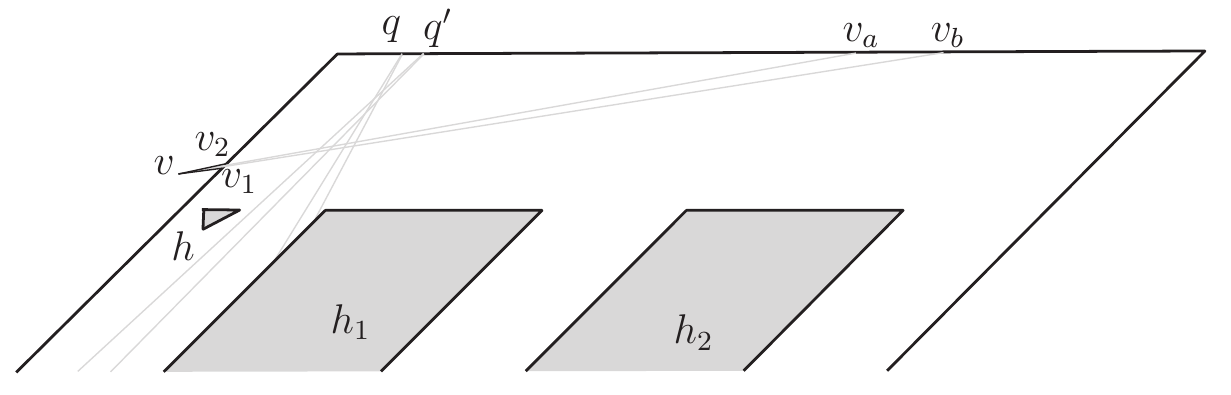}
\caption{Details of visibility cone construction for $v$.}
\label{vis-cone}
\end{figure}

We can explicitly compute the equations of the 6 visibility lines emanating from $q',r',s',u,v,w$ and 
we can compute the intersection points formed by them. 
 Let $p_{u,q'}$ be the intersection point of the visibility line of $u$ and the visibility line of $q'$, etc.
 We can next choose points $a,b,c$ on the visibility lines of $q',r',s'$, respectively, such that each point
 is in-between the appropriate $p$ points.  
For example, point $a$ is the midpoint of $p_{v,q'}$ and $p_{u,q'}$.
 From these, we can construct points $q,r,s$ and their visibility lines through $a,b,c$ to appropriate points on the channels.
Observe that because $a,b,c$ lie in the upper half of $B_1(x)$, points $q,r,s$ lie in the extensions of the channels, and remain to the right of $d,u',v'$, respectively.
 
It remains to enlarge the visibility lines to cones.  We can do this by explicitly
 computing a tolerance $\tau$ such that if the width of every cone is at most
 $\tau$ inside $P_1$ then no visibility cone will contain points it should not,
 and no two visibility cones will intersect when they should not.  Since we only 
 need a lower bound on $\tau$, this can be done with a polynomial number of bits.
 Finally, from $\tau$ we can explicitly choose the points where each visibility
 cone intersects the boundary of $P_1$ (see for example points $v_a$ and $v_b$
 in Figure~\ref{vis-cone}), and from these we can compute the dent vertices for
 each peak (see points $v_1$ and $v_2$ in the same figure).

 \begin{figure}[pt]
\centering
\includegraphics[width=.9\textwidth]{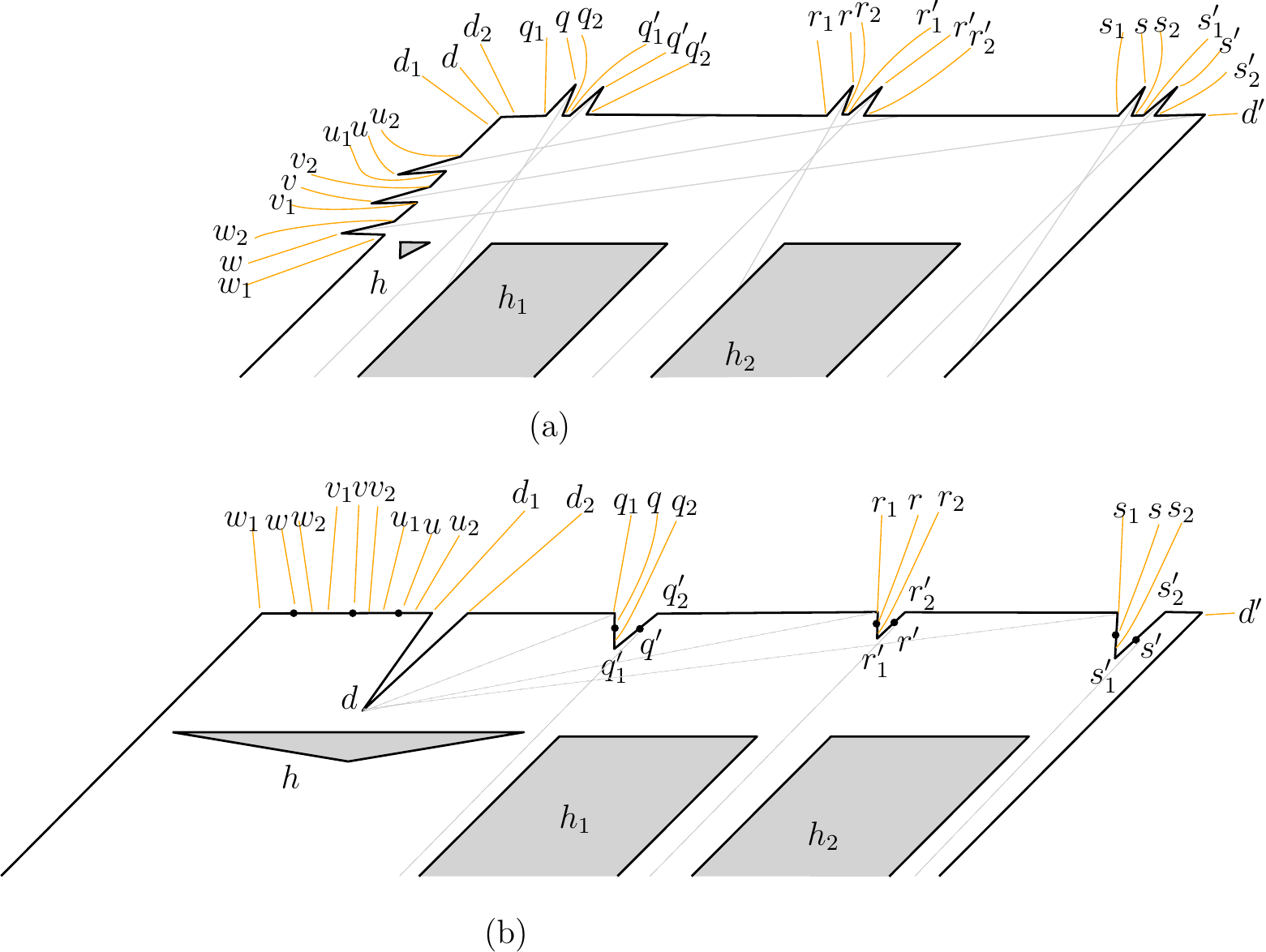}
\caption{(a)--(b) Schematic representations for  $R_1(c)$ and $R_2(c)$.}
\label{fig:details}
\end{figure}
  
Finally, we place the hole $h$ in the first channel of $R_1(c)$ 
 such that the base of $h$ is aligned 
 with the top sides of $h_1$ and $h_2$. Furthermore, we ensure that $h$ remains  
 to the left of the visibility region of $q'$.

The construction of the channels and inward dents for $R_2(c)$ is simpler compared to 
  $R_1(c)$. 
 We choose the inward dent $D$ with peak $d$ such that the 
 entire left side of each remaining inward  dent is visible to $d$,
 as illustrated using dotted lines in Figure~\ref{fig:details}(b).
 Furthermore, we ensure that exactly one $\mu$-point is visible to each of $q',r',s'$. 
Figure~\ref{fig:details}(b) illustrates these visibilities with dashed lines.

  
The placement of the triangular hole $h$ is similar to that of 
 $R_1(c)$. Here we ensure an additional constraint that the base of 
 $h$ must be large enough to block any $1$-bend visibility between 
 $\{u,v,w\}$ and $\{q',r',s'\}$.

\section{Conclusion}
We have proved that computing compatible triangulations with at most $k$ Steiner points is NP-hard for polygons with holes.  The following questions are open:

\begin{enumerate}
\item Is the problem in NP?  Is it complete for existential theory of the reals~\cite{Schaefer09}?
\item What is the complexity of the problem for a pair of simple polygons?  For a pair of rectangles with points inside?
\item How hard is it to decide if two polygonal regions, or two rectangles with points inside, have compatible triangulations with no Steiner points? For simple polygons, this can be decided in polynomial-time~\cite{DBLP:journals/comgeo/AronovSS93}.
\end{enumerate}
 


\bibliographystyle{abbrv}
\bibliography{ref}

\end{document}